\documentclass[lineno]{biometrika}

\usepackage{amsmath}

\RequirePackage[colorlinks,citecolor=blue,urlcolor=blue]{hyperref}

\usepackage{newtxtext}
\usepackage[subscriptcorrection]{newtxmath}

\usepackage{dsfont}
\usepackage{float}
\usepackage{caption}
\usepackage[plain,noend]{algorithm2e}

\makeatletter
\renewcommand{\algocf@captiontext}[2]{#1\algocf@typo. \AlCapFnt{}#2}

\def\@algocf@capt@plain{top}
\renewcommand{\algocf@makecaption}[2]{%
  \addtolength{\hsize}{\algomargin}%
  \sbox\@tempboxa{\algocf@captiontext{#1}{#2}}%
  \ifdim\wd\@tempboxa >\hsize%
    \hskip .5\algomargin%
    \parbox[t]{\hsize}{\algocf@captiontext{#1}{#2}}%
  \else%
    \global\@minipagefalse%
    \hbox to\hsize{\box\@tempboxa}%
  \fi%
  \addtolength{\hsize}{-\algomargin}%
}
\makeatother

\def\E{E}
\def\P{{\rm pr}}

\def\Sset{\mathcal{S}}
\def\M{M_\text{int}}
\def\Ps{\mathcal{P}_\Sset}
\def\Bset{\mathcal{B}}
\def\Pb{\mathcal{P}_\Bset}

\def\data{\mathcal{D}_n}
\def\test{\mathcal{T}}

\newfloat{algorithm}{H}{loa}
\floatstyle{plaintop}
\floatname{algorithm}{\textit{Algorithm}}
\restylefloat{algorithm}

\usepackage{hyperref}

\begin{document}

\nolinenumbers

\jname{preprint}
\copyrightinfo{}

\markboth{N. Williams et~al.}{Controlling random seed stability via bagging}

\title{Improving reproducibility by controlling random seed stability in machine learning based estimation via bagging}

\author{N. WILLIAMS}
\affil{Division of Biostatistics, University of California, Berkeley,\\ California, U.S.A.
\email{nicholaswilliams@berkeley.edu}}

\author{\and A. SCHULER}
\affil{Division of Biostatistics, University of California, Berkeley,\\ California, U.S.A. \email{alejandro.schuler@berkeley.edu}}

\maketitle

\begin{abstract}
Predictions from machine learning algorithms can vary across random seeds, inducing instability in downstream debiased machine learning estimators. We formalize random seed stability via a concentration condition and prove that subbagging guarantees stability for any bounded-outcome regression algorithm. We introduce a new cross-fitting procedure, adaptive cross-bagging, which simultaneously eliminates seed dependence from both nuisance estimation and sample splitting in debiased machine learning. Numerical experiments confirm that the method achieves the targeted level of stability whereas alternatives do not. Our method incurs a small computational penalty relative to standard practice whereas alternative methods incur large penalties.
\end{abstract}

\begin{keywords}
Algorithmic Stability, Reproducibility, Bagging, Debiased Machine Learning
\end{keywords}

\section{Introduction}

The use of machine learning in semiparametric estimation is becoming more common. However, cautionary warnings have appeared that such estimators may suffer from ``random seed dependence'': running the same estimator on the same data yields a different result each time \citep{schader2024don,naimi2024pseudo}. To illustrate, consider a trial statistician who pre-specifies a machine learning method for covariate adjustment to improve power \citep{williams2022optimising}. The drug shows no effect. Knowing the estimator has stochastic dependence, the statistician sets a new seed, reruns the analysis, finds a significant effect, and submits those results. Clearly, such a scenario is a form of ``p-hacking'' and should be prevented. 

Prior study of random seed stability has, to our knowledge, been limited to Monte Carlo experimentation under arbitrary data generating processes. While informative, such results have limited generalizability and offer no theoretical guarantees. That being said, the broad conclusion across this literature is consistent: aggregate over enough seeds \citep{chernozhukov2018double,schader2024don,naimi2024pseudo} and use enough folds \citep{williams2025re,zivich2024commentary}---but without ever addressing what constitutes enough. There are two sources of random seed dependence in causal machine learning: (i) the stochastic nature of the machine learning algorithms used to estimate nuisance parameters, and (ii) randomness induced by sample splitting for cross-fitting \citep{naimi2024pseudo}. The primary focus of this paper is the former, and we show that bagging can eliminate stochastic dependence from the algorithms themselves. However, because bagging operates through resampling, it also provides a natural mechanism for cross-fitting without explicit sample splitting, which addresses the second source as well.

In this paper, we study the problem of random seed dependence through the lens of algorithmic stability. We first introduce a formal definition of random seed stability in the form of a concentration condition on the probability that the difference in predictions from the same supervised learning algorithm trained on the same data using different seeds exceeds some threshold. We then build on previous work that shows that bootstrap aggregating (bagging) provides algorithmic stability guarantees \citep{soloff2024bagging}. Our contributions are as follows:
\begin{enumerate}
\item we prove that with enough bags the subbagged version of any machine learning algorithm satisfies our definition of being seed stable, 
\item we prove that a generic real-valued function of bagged predictions is itself seed stable under regularity conditions, and
\item we introduce a new cross-fitting algorithm, adaptive cross-bagging, that guarantees seed stability.
\end{enumerate}

\section{Defining random seed stability}

Let $z_i = (x_i,y_i)$, where $x_i \in \mathcal X \subseteq \mathds{R}^d$ and $y_i \in \mathds{R}$. Let $z_1, z_2, \dots$ be a deterministic sequence, and let $\mathcal{D}_n = \{z_i\}_{i=1}^n$ so that $\mathcal{D}_n \subset \mathcal{D}_{n+1}$; $\mathcal{D}_n$ is a fixed training dataset of $n$ observations. Because $\data$ is finite and fixed, $\{y_i\}_{i=1}^n$ is bounded and, without loss of generality, we rescale $y_i\in [0,1]$ via min-max scaling. Let $\test \subset \mathcal X$ with $|\test| = k$ be a held-out set of $k$ predictor values. Let $\Sset = \left\{1, \dots, \M\right\}$ where $\M$ is the largest integer that can be represented on a machine; define $\Ps := \text{Unif}(\Sset)$. Let $\mathfrak F$ denote a generic supervised machine learning algorithm (referred to as a regression algorithm throughout) and $\mathfrak{F}(\data, s) : \mathcal X \rightarrow [0,1]$ be a realized prediction function, where $s \in \Sset$ is a random seed. Throughout, a subscript on $E$ or $\P$ indicates expectation or probability over that variable alone, with all other quantities fixed (e.g., $\P_s$ is taken with respect to $\Ps$). We let $\mathds{1}[A]$ be the indicator function that returns one when event $A$ occurs and zero otherwise, and $\mathbf{1}_V$ be a vector of ones with size $V$. 

\begin{definition}\label{def:stability}
A regression algorithm $\mathfrak F$ is random seed $(\epsilon,\delta)$-stable on training data $\data$ and testing set $\test$ if for all values $x \in \test$
\[
\P_{s,s'}\left(\left| \mathfrak{F}(\data, s)(x) - \mathfrak{F}(\data, s')(x)\right| \geq \epsilon \right) \leq \delta.
\]
\end{definition}

Definition~\ref{def:stability} aligns with practical application: an analyst sets a seed $s$, trains the machine learning algorithm, and predicts on a new data point(s). Furthermore, Definition~\ref{def:stability} mirrors classical notions of algorithmic stability \citep{shalev2010learnability}, which bound the effect of perturbing the training data; here, the perturbation is instead to the seed. If the variance of predictions across seeds is small enough then the algorithm will be seed stable. 
\begin{lemma}\label{lem:basestability}
Assume $\mathfrak{F}$ is a regression algorithm such that for any dataset $\mathcal{D}$ and seed $s$, the fitted function $\mathfrak{F}(\mathcal{D}, s)$ takes values in $[0,1]$. Define the idealized point-wise prediction function over all seeds as
\begin{equation*}
    {\mathfrak F}^\infty(\data)(x) = \E_s\left[\mathfrak{F}(\data, s)(x)\right],
\end{equation*}
and the point-wise prediction variance over seeds is
\begin{equation*}\label{eq:basemse}
    \sigma^2(x; \data) = \E_s\left[\left(\mathfrak{F}(\data, s)(x) -  {\mathfrak F}^\infty(\data)(x) \right)^2 \right].
\end{equation*}
Then regression algorithm $\mathfrak F$ is random seed $(\epsilon,\delta)$-stable on training data $\data$ and test set $\test$ whenever \[\max_{x \in \test} \sigma^2(x; \data) \le \frac{\epsilon}{4} \left(\frac{\epsilon}{\log(2 k/\delta)} - \frac{2}{3}\right).\]
\end{lemma}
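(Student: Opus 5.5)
The plan is to apply Bernstein's inequality to the seed-to-seed difference at a single test point, and then match the exponent to the stated variance threshold. Fix $x \in \test$ and let $s, s'$ be independent draws from $\Ps$. Define
\[
D(x) = \mathfrak{F}(\data, s)(x) - \mathfrak{F}(\data, s')(x).
\]
Because $s$ and $s'$ are i.i.d., $D(x)$ has mean zero. Since $\mathfrak{F}(\data,s)(x) \in [0,1]$ for every seed, $|D(x)| \le 1$. Independence also gives
\[
\operatorname{Var}_{s,s'}\{D(x)\} = 2\sigma^2(x;\data),
\]
because each term has variance $\sigma^2(x;\data)$ around $\mathfrak F^\infty(\data)(x)$ and the cross term vanishes.

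Next apply Bernstein's inequality to the single bounded, centered variable $D(x)$, with bound $M = 1$ and variance $v = 2\sigma^2(x;\data)$. This gives
\[
\P_{s,s'}\left(|D(x)| \ge \epsilon\right) \le 2\exp\left(-\frac{\epsilon^2/2}{2\sigma^2(x;\data) + \epsilon/3}\right).
\]
Write $L = \log(2k/\delta)$. The right-hand side is at most $\delta/k$ exactly when
\[
\frac{\epsilon^2}{2\{2\sigma^2(x;\data) + \epsilon/3\}} \ge L,
\quad\text{i.e.}\quad
4\sigma^2(x;\data) + \frac{2\epsilon}{3} \le \frac{\epsilon^2}{L},
\quad\text{i.e.}\quad
\sigma^2(x;\data) \le \frac{\epsilon}{4}\left(\frac{\epsilon}{L} - \frac{2}{3}\right).
\]
This is precisely the hypothesis of Lemma~\ref{lem:basestability}, since it is assumed for the maximum over $x \in \test$ and hence holds at every $x$.

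It follows that for every $x \in \test$ the probability is at most $\delta/k \le \delta$, which is Definition~\ref{def:stability}. The factor $k$ inside the logarithm is therefore not needed for the pointwise definition. Retaining it gives the stronger uniform statement: a union bound over the $k$ test points shows that the event $\max_{x\in\test}|D(x)| \ge \epsilon$ also has probability at most $\delta$.

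The main point needing care is using the correct form of Bernstein's inequality. It must be the version for a bounded, mean-zero variable, with the constant $1/3$ in front of $M t$ and the variance of $D$ rather than of a single prediction. The algebra only produces the stated constants $\epsilon/4$ and $2/3$ when the variance is taken to be $2\sigma^2$ and the bound to be $M = 1$. One should also check that the hypothesis implicitly forces $\epsilon/L > 2/3$, since otherwise the right-hand side is negative. In that case the condition is vacuous, because $\sigma^2 \ge 0$ can only satisfy it when it is nonpositive, so nothing further is required.
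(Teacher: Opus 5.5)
Your proof is correct and follows the paper's argument: Bernstein's inequality on the mean-zero difference bounded in $[-1,1]$ with variance $2\sigma^2(x;\data)$, then solving $2\exp\{-\epsilon^2/(4\sigma^2+2\epsilon/3)\}\le\delta/k$ for $\sigma^2$, with a union bound over $\test$. Your added remark is also accurate: the factor $k$ is only needed for the uniform version the paper proves, not for the pointwise definition as stated.
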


Lemma~\ref{lem:basestability} tells us that seed stability reduces to controlling $\sigma^2(x;\data)$: if the variance of predictions over seeds is sufficiently small, stability is guaranteed. If we suppose that $\sigma^2(x;\data) \rightarrow 0$ as $n \rightarrow \infty$ point-wise over $x$, then the bound in Lemma~\ref{lem:basestability} is eventually satisfied for any $(\epsilon,\delta)$ pair. Therefore any sensible regression algorithm should be seed $(\epsilon,\delta)$-stable on prediction set $\test$ given the training data $\data$ is large enough. However, the rate at which $\sigma^2(x;\data)$ decays at any test point $x$ depends on the algorithm and the training data, and will generally be unknown. In the following sections, we instead show that any bounded regression algorithm can be made seed stable with bagging. 

\section{Bagging and random seed stability}

Bagging is an ensemble method that broadly refers to the practice of repeatedly resampling training data, training regression algorithm $\mathfrak{F}$ on each resample (called ``bags''), and averaging the models \citep{breiman1996bagging,breiman1996heuristics}. Bagging was originally implemented using the nonparametric bootstrap \citep{efron1979bootstrap} by drawing $m = n$ resamples with replacement. We focus on the subbagging variant \citep{andonova2002simple}, which is more computationally tractable and guarantees a fixed number of held-out observations which we will make use of in \S\ref{sec:cmle}. However, our results hold for other bagging variants as well. 

\subsection{Subbagging controls random seed stability}

Let $\rho \in (0,1)$ be a fixed constant, $m = \lfloor \rho n \rfloor$, and $B_m^j = \{i_1, \dots, i_m \} \subset [n]$ where $1 \leq m \leq n$ denote an $m$-size index subsample constructed without replacement. There are then $K = C(n, m)$ possible subsamples; denote the collection of all the $m$-size subsamples as $\Bset_m = \{B_m^1, \dots,  B_m^K\}$; define $\Pb = \text{Unif}(\Bset_m)$. Let $\data(B_m^j)$ be the sub-dataset consisting of the observations of $\data$ indexed by $B_m^j$. 
We then define $\mathfrak F(\data(B_m^j), s)$ as the algorithm trained on the $B_m^j$ subsample and seed $s$. 
Let $\Ps \otimes \Pb$ be the product measure induced by $\Sset \times \Bset$. For $V < K$, define $\zeta = \left\{(s_1, B_m^1), \dots, (s_V, B_m^V)\right\}$ as the set of $V$ independent seed-and-bag pairs each drawn from $\Ps \otimes \Pb$. A subbagged realization of algorithm $\mathfrak F$ is then given by
$$
\bar{\mathfrak F}(\data, \rho, \zeta)(x) = \frac{1}{V} \sum_{v=1}^V \mathfrak{F}(\data(B_m^v), s_v)(x).
$$

\begin{theorem}\label{th:stability}
Assume $\mathfrak{F}$ is a regression algorithm such that for any dataset $\mathcal{D}$ and seed $s$, the fitted function $\mathfrak{F}(\mathcal{D}, s)$ takes values in $[0,1]$. Define the idealized subbagged point-wise prediction function over all bags and seeds as
\[
\bar{\mathfrak F}^\infty(\data, \rho)(x) = E_{(s,B_m)} \left[\mathfrak{F}(\data(B_m), s)(x)\right],
\]
and the point-wise prediction variance over bags and seeds as
\begin{equation*}\label{eq:bagmse}
\nu^2(x; \data,\rho)  = \E_{(s,B_m)}\left[\left(\mathfrak{F}(\data(B_m), s)(x) - \bar{\mathfrak F}^\infty(\data, \rho)(x)\right)^2 \right].
\end{equation*}
Since the $V$ bags are drawn independently, the variance of the subbagged average prediction is $\nu^2(x;\data,\rho) / V$. Then a subbagged version of regression algorithm $\mathfrak F$ is random seed $(\epsilon,\delta)$-stable on training data $\data$ and test set $\test$ when

\[V \ge  \max_{x\in \test}\frac{\log(2k/\delta)}{\epsilon^2}\left(4\,\nu^2(x;\data,\rho) + \frac{2}{3}\,\epsilon\right).\]
\end{theorem}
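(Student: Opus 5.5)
The plan is to apply Bernstein's inequality directly to the \emph{difference} of two independent subbagged realizations, rather than bounding each realization's deviation from $\bar{\mathfrak F}^\infty(\data,\rho)(x)$ and combining them with the triangle inequality. Working with the difference avoids the factor-of-two loss and recovers the stated constant.

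Fix $x \in \test$. Here the role of the seed $s$ in Definition~\ref{def:stability} is played by the collection of seed-and-bag pairs $\zeta$. Let $\zeta = \{(s_v, B^v_m)\}_{v=1}^V$ and $\zeta' = \{(s'_v, B'^v_m)\}_{v=1}^V$ be independent, each consisting of $V$ i.i.d.\ draws from $\Ps \otimes \Pb$. Write
$$X_v = \mathfrak F(\data(B^v_m), s_v)(x), \qquad X'_v = \mathfrak F(\data(B'^v_m), s'_v)(x),$$
and set $D_v = X_v - X'_v$. Then the quantities $D_1,\dots,D_V$ are i.i.d. with the following properties:
\begin{itemize}
\item $E[D_v] = 0$, since $X_v$ and $X'_v$ have the same distribution;
\item $|D_v| \le 1$ almost surely, since $X_v, X'_v \in [0,1]$ by the boundedness assumption;
\item $\mathrm{var}(D_v) = 2\nu^2(x;\data,\rho)$, by independence of $X_v$ and $X'_v$.
\end{itemize}
Moreover,
$$\bar{\mathfrak F}(\data,\rho,\zeta)(x) - \bar{\mathfrak F}(\data,\rho,\zeta')(x) = \frac{1}{V}\sum_{v=1}^V D_v .$$

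Next apply the two-sided Bernstein inequality for bounded, mean-zero, independent variables with $|D_v|\le 1$, at threshold $t = V\epsilon$:
\[
\P_{\zeta,\zeta'}\Bigl(\Bigl|\sum_{v=1}^V D_v\Bigr| \ge V\epsilon\Bigr) \le 2\exp\!\left(-\frac{V^2\epsilon^2/2}{2V\nu^2 + V\epsilon/3}\right) = 2\exp\!\left(-\frac{V\epsilon^2}{4\nu^2 + \tfrac{2}{3}\epsilon}\right).
\]
We require this to be at most $\delta/k$. Rearranging gives exactly
$$V \ge \frac{\log(2k/\delta)}{\epsilon^2}\left(4\nu^2(x;\data,\rho) + \tfrac{2}{3}\epsilon\right).$$
Taking the maximum over $x\in\test$ ensures this condition holds simultaneously at all $k$ test points. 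Each point then has failure probability at most $\delta/k$, which is stronger than the per-point requirement of Definition~\ref{def:stability}. The paper's own $\log(2k/\delta)$ suggests it is aiming at a union bound over $\test$; that union bound is immediate, since the $k$ failure probabilities sum to at most $\delta$.

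The only step needing care is the variance computation for $D_v$ and the bookkeeping that places the constants $4\nu^2$ and $\tfrac{2}{3}\epsilon$ correctly. In particular:
\begin{itemize}
\item the variance of a single $D_v$ is $2\nu^2$, not $\nu^2$;
\item the range bound is $1$, not $2$, because $D_v$ is a difference of two $[0,1]$-valued quantities.
\end{itemize}
With these two facts the Bernstein exponent is exactly $V\epsilon^2/(4\nu^2 + \tfrac{2}{3}\epsilon)$. The rest of the argument is routine.
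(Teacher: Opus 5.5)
Your proof is correct and follows the paper's route: apply Bernstein's inequality to the difference $\bar{\mathfrak F}(\data,\rho,\zeta)(x)-\bar{\mathfrak F}(\data,\rho,\zeta')(x)$ of two independent subbagged realizations at per-point level $\delta/k$ (the paper adds the union bound over $\test$), then solve for $V$. Writing this difference as an average of the $V$ i.i.d.\ mean-zero differences $D_v$, with $|D_v|\le 1$ and variance $2\nu^2$, is in fact what produces the stated exponent $V\epsilon^2/(4\nu^2+\tfrac{2}{3}\epsilon)$; the paper leaves this step implicit when it applies Bernstein to $S(x)$ directly.
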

Similar to Lemma~\ref{lem:basestability}, Theorem~\ref{th:stability} is derived from an application of Bernstein's inequality. The key insight is that $V$ is a free parameter so that any bounded regression algorithm, regardless of its seed stability, can be made seed $(\epsilon,\delta)$-stable by choosing $V$ sufficiently large. By Popoviciu's inequality \citep{popoviciu1935equations}, since $y_i \in [0,1]$, $\nu^2(x;\data,\rho) \leq 1/4$; therefore the worst-case upper bound on the number of bags required for seed $(\epsilon,\delta)$-stability can always be calculated for a regression algorithm bounded in $[0,1]$. However, if we again assume that $\nu^2(x;\data,\rho)  \rightarrow 0$ as $n \rightarrow \infty$ point-wise then the number of bags required for $(\epsilon,\delta)$-stability will decrease with larger training data. 

\section{Implications for debiased machine learning}\label{sec:cmle}

Presume we are now interested in a scalar mapping of $p$ different estimated prediction functions. Let $\eta_s = \{\eta_{s,l}\}_{l=1}^p$, where $\eta_{s,l} = \{\mathfrak F_l (\data, s)(x_i) : z_i \in \data\}$ is a vector of $n$ predictions from the $l$th prediction function generated under seed $s$. Here, unlike in previous sections, predictions are evaluated on the training data such that $\test = \data$. Let $\Psi(\data, \eta_s) \in \mathds R$ be a debiased machine learning estimator (e.g., augmented inverse probability weighting [AIPW], targeted minimum loss-based estimation [TMLE], or double machine learning [DML]) \citep{robins1994estimation,vanderLaan2006,chernozhukov2018double} that maps the data $\data$ and nuisance predictions $\eta_s$ to a real-valued estimate of a target parameter $\psi_0$. 
\begin{definition}
A scalar-valued function $\Psi(\data,\eta)$ is random seed $(\epsilon, \delta)$-stable on $\data$ if
\[
\P_{s,s'}\left(\left|\Psi(\data, \eta_s) - \Psi(\data, \eta_{s'})\right| \geq \epsilon\right) \leq \delta.
\]
\end{definition}
Consider the AIPW estimator for the average treatment effect (ATE). 
Here $z_i = (x_i, a_i, y_i)$, where $a_i \in \{0,1\}$ is a binary 
exposure and $y_i \in \{0,1\}$ is a binary outcome. We have $p = 3$ with $\mathfrak{F}_1(\data,s)(x)$ predicting $\P(A=1 \mid X=x)$, $\mathfrak{F}_2(\data,s)(x)$ predicting $\P(Y=1 \mid A=1, X=x)$, and $\mathfrak{F}_3(\data,s)(x)$ predicting $\P(Y=1 \mid A=0, X=x)$. We write  $\eta_s = \{\pi_s, \mu_s^1, \mu_s^0\}$, where $\pi_s = \{\mathfrak{F}_1(\data, s)(x_i) : z_i \in \data\}$, $\mu_s^1 = \{\mathfrak{F}_2(\data, s)(x_i) : z_i \in \data\}$, and $\mu_s^0 = \{\mathfrak{F}_3(\data, s)(x_i) : z_i \in \data\}$. The AIPW estimator of the ATE is
\begin{equation*}\label{eq:intercept}
    \Psi(\data, \eta_s) = \frac{1}{n} \sum_{i=1}^n \left(\frac{a_i}{\pi_{s,i}} - \frac{1 - a_i}{1 - \pi_{s,i}} \right)\left(y_i - \mu_{s,i}^{a_i}\right) + \mu_{s,i}^{1} - \mu_{s,i}^{0}.
\end{equation*}
Conditional on $\data$, the only source of randomness in $\Psi(\data, \eta_s)$ is from seed dependence in the training of the models for generating nuisance predictions $\eta_s$. While Theorem~\ref{th:stability} guarantees seed $(\epsilon,\delta)$-stability for each set of the component predictions separately, it does not automatically imply that $\Psi$ is seed $(\epsilon,\delta)$-stable on $\data$. Furthermore, such estimators often require the use of a sample splitting procedure called \textit{cross-fitting} to avoid imposing a condition that the space of functions $\mathfrak{F}$ searches over is $P$-Donsker \citep{Zheng2011,chernozhukov2018double}, which is typically unjustified if machine learning algorithms are used. The standard cross-fitting procedure uses $V$-fold cross-validation. Although technically only 2 folds are required to eliminate the Donsker condition, how to choose the number of folds to optimize finite-sample performance remains an open question. Leave-one-out (LOO) cross-fitting is guaranteed to eliminate the variability induced by sample splitting, but there is no guarantee it sufficiently controls the remaining variability due to seed stochasticity in the nuisance models themselves. Moreover, the standard proof that a DML estimator with cross-fitting is asymptotically normal without a Donsker condition does not extend to LOO cross-fitting, leaving the asymptotic normality of such estimators without theoretical justification.

Denote bagged nuisance predictions $\bar\eta_\zeta = \{\bar\eta_{\zeta,l}\}_{l=1}^p$ where $\bar\eta_{\zeta,l} = \{\bar{\mathfrak F}_l (\data, \rho, \zeta)(x_i) : z_i \in \data\}$ is a vector of $n$ predictions from the $l$th bagged prediction function generated under seed and bag pairs $\zeta$. The following corollary establishes a sufficient condition on $V$ for the function $\Psi(\data, \bar\eta_\zeta)$ to be seed $(\epsilon,\delta)$-stable on $\data$.

\begin{corollary}\label{cor:psistab}
Suppose the conditions of Theorem~\ref{th:stability} hold and function $\Psi(\data,\eta)\in \mathds{R}$ is Lipschitz continuous in $\eta$ with Lipschitz constant $L$ with respect to $\|\cdot\|_\infty$.
Then $\Psi(\data, \bar\eta_\zeta)$ is random seed $(\epsilon,\delta)$-stable on $\data$ whenever
\[V \geq \max_{z_i \in \data, l \in [p]}\frac{\log(2np/\delta)}{(\epsilon/L)^2}\left(4\nu^2_l(x_i;\data,\rho) + \frac{2}{3}\frac{\epsilon}{L}\right),\]
where the point-wise prediction variance over bags and seeds $\nu_l^2(x;\data,\rho)$ for the $l$th prediction function is defined in Theorem~\ref{th:stability}.
\end{corollary}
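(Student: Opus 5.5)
The plan is to reduce stability of $\Psi$ to uniform stability of the bagged nuisance predictions and then apply Theorem~\ref{th:stability} with a rescaled tolerance and an enlarged test set. By the Lipschitz assumption,
\[
\left|\Psi(\data,\bar\eta_\zeta) - \Psi(\data,\bar\eta_{\zeta'})\right| \le L\,\|\bar\eta_\zeta - \bar\eta_{\zeta'}\|_\infty = L \max_{z_i\in\data,\, l\in[p]} \left|\bar{\mathfrak F}_l(\data,\rho,\zeta)(x_i) - \bar{\mathfrak F}_l(\data,\rho,\zeta')(x_i)\right|.
\]
Hence the event $\{|\Psi(\data,\bar\eta_\zeta) - \Psi(\data,\bar\eta_{\zeta'})| \ge \epsilon\}$ is contained in the union over the $np$ pairs $(i,l)$ of the events $\{|\bar{\mathfrak F}_l(\data,\rho,\zeta)(x_i) - \bar{\mathfrak F}_l(\data,\rho,\zeta')(x_i)| \ge \epsilon/L\}$. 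This is the deterministic part of the argument.

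Next I control this union of $np$ events with total probability at most $\delta$. One way is a union bound over $l$ with budget $\delta/p$ each. I think it is cleaner to go back inside the proof of Theorem~\ref{th:stability} rather than use its conclusion as a black box. That proof applies Bernstein's inequality to each bagged average around $\bar{\mathfrak F}^\infty_l(\data,\rho)(x_i)$, with summands in $[0,1]$ and variance $\nu_l^2(x_i;\data,\rho)$. With threshold $t=\epsilon/(2L)$ this gives a one-sided tail of roughly $\exp\{-Vt^2/(2\nu_l^2 + 2t/3)\}$. The triangle inequality then handles the two independent draws $\zeta,\zeta'$. A union bound over the $n$ points, the $p$ functions, two signs and the two draws produces the constant $2np$ inside the logarithm, just as $2k$ arises in Theorem~\ref{th:stability} from $k$ test points.

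Inverting the exponential bound to be at most $\delta/(2np)$ gives the stated requirement. In effect this is Theorem~\ref{th:stability} with $\epsilon$ replaced by $\epsilon/L$, $k$ replaced by $np$, and the maximum taken over $x_i\in\data$ and $l\in[p]$. Substituting these changes into the theorem's expression reproduces the displayed bound on $V$. One point to check is that the same $V$ seed-and-bag pairs $\zeta$ may be shared across the $p$ learners. This is harmless: the union bound needs no independence across $l$, only the marginal concentration for each $(i,l)$.

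The main obstacle is bookkeeping the constants so that the union bound yields exactly $\log(2np/\delta)$ with the factors $4\nu_l^2$ and $\tfrac23\,\epsilon/L$, rather than a slightly different constant such as $4np$. This depends on how Theorem~\ref{th:stability} split $\epsilon$ and $\delta$ between the two draws. I will follow that proof's allocation verbatim, making only the substitutions $\epsilon\to\epsilon/L$ and $k\to np$.
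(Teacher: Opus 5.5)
Your Lipschitz reduction and overall skeleton (pass to the sup-norm event, union bound over the $np$ pairs $(i,l)$, Bernstein, invert) match the paper's proof. Your remark that sharing $\zeta$ across the $p$ learners is harmless is also right. The gap is in the concentration step you actually develop, which does not deliver the stated threshold.

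You bound $|\bar{\mathfrak F}_l(\data,\rho,\zeta)(x_i)-\bar{\mathfrak F}_l(\data,\rho,\zeta')(x_i)|\ge\epsilon/L$ by the events $|\bar{\mathfrak F}_l(\data,\rho,\zeta)(x_i)-\bar{\mathfrak F}^\infty_l(\data,\rho)(x_i)|\ge\epsilon/(2L)$ and the same for $\zeta'$. Per $(i,l)$ this gives at most $4\exp\{-V(\epsilon/L)^2/(8\nu_l^2+\tfrac43\,\epsilon/L)\}$. Halving the threshold costs a factor $4$ in $t^2$, while using a single draw saves only a factor $2$ in variance. Your union bound is also over $n\cdot p\cdot 2\cdot 2=4np$ events, not $2np$. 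The resulting sufficient condition is $V\ge(\epsilon/L)^{-2}\log(4np/\delta)\,(8\nu_l^2+\tfrac43\,\epsilon/L)$, more than twice the stated one. At the stated $V$, your route only certifies $\P_{\zeta,\zeta'}(\cdot)\le 2(2np\delta)^{1/2}$, which generally exceeds $\delta$; for $n=100$, $p=3$, $\delta=0.01$ it exceeds $1$. So ``following that proof's allocation verbatim'' cannot rescue the constants, because no such split exists in the paper's proof.

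The missing idea is that the proof of Theorem~\ref{th:stability} never splits $\epsilon$ or $\delta$ between the two draws. It applies Bernstein's inequality directly to $S(x)=\bar{\mathfrak F}(\data,\rho,\zeta)(x)-\bar{\mathfrak F}(\data,\rho,\zeta')(x)$. Pair the $v$th element of $\zeta$ with the $v$th element of $\zeta'$. Then $S(x)$ is an average of $V$ i.i.d.\ mean-zero terms in $[-1,1]$ with variance $2\nu^2(x;\data,\rho)$, so $\P_{\zeta,\zeta'}(|S(x)|\ge t)\le 2\exp\{-Vt^2/(4\nu^2+\tfrac23 t)\}$. Take $t=\epsilon/L$ for each $(i,l)$ and require each term to be at most $\delta/(np)$; this yields exactly $\log(2np/\delta)$ and $4\nu_l^2+\tfrac23\,\epsilon/L$.

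Alternatively, the option you set aside also works as it stands. Apply the uniform-over-$\test$ bound from Theorem~\ref{th:stability}'s proof to each $l$ with $\test=\data$ ($k=n$), tolerance $\epsilon/L$ and budget $\delta/p$. This gives the statement directly, since $\log\{2n/(\delta/p)\}=\log(2np/\delta)$.
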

The Lipschitz condition will hold whenever all of the partial derivatives of $\Psi(\data, \eta)$ with respect to estimated nuisance parameters $\eta$ exist and are bounded. For example, it trivially holds for the AIPW estimator of the ATE when $c < \pi_i < 1-c$ for some $c > 0$. 
However, there are a variety of conditions under which the Lipschitz condition could hold.

Existing approaches have targeted seed stability by averaging over final estimates \citep{chernozhukov2018double,schader2024don}, on the grounds that averaging over nuisance parameters would not account for variance induced by sample splitting, and that repeated cross-fitting would be prohibitively computationally expensive \citep{schader2024don}. We propose a new cross-fitting procedure, cross-bagging, that simultaneously eliminates seed dependence arising from both machine learning estimation of nuisance parameters and sample splitting. The algorithm is as follows:

\begin{enumerate}
	\item Fix $\rho \in (0, 1)$ and choose $V$ large enough to satisfy Corollary~\ref{cor:psistab}.
	\item Construct $\zeta$: Draw $m = \lfloor \rho n\rfloor$-sized subsamples of indices from $\Pb$ until each observation is OOB at least $V$ times. Denote the total number of subsamples as $V^\dagger$. Further draw $V^\dagger$ seeds from $\Ps$ to pair with each bag.
	\item For $(v,l) \in [V^\dagger] \times [p]$, train $\mathfrak{F}_l(\data(B_m^v), s_v)$.
	\item Define matrix $M \in \{0,1\}^{n \times V^\dagger}$ where $M_{iv} = \mathds{1}[i \notin B_m^v]$, and $C = M \mathbf{1}_{V^\dagger}$. For $(z_i,l) \in \data \times [p]$, pool OOB nuisance estimates:
	$$
	  \bar{\mathfrak{F}}_l(\data, \rho,\zeta)(x_i) = \frac{1}{C_i} \sum_{v=1}^{V^\dagger} M_{iv}\,\mathfrak{F}_l(\data(B_m^v), s_v)(x_i).
	$$
	\item 
	Let $\bar\eta_\zeta = \{\bar\eta_{\zeta,l}\}_{l=1}^p$ where $\bar\eta_{\zeta,l} = \{\bar{\mathfrak F}_l (\data, \rho, \zeta)(x_i) : z_i \in \data\}$. The cross-bagged estimator is
	$\Psi(\data,\bar\eta_\zeta)$.
\end{enumerate}

The practical utility of this algorithm is limited as it either requires knowledge of $\nu_l^2(x_i;\data,\rho)$ or assuming the upper bound for $\nu_l^2(x_i;\data,\rho)$. Further, while the bound on $V$ in Corollary~\ref{cor:psistab} is sufficient for seed $(\epsilon,\delta)$-stability, it is not necessary; it may be prohibitively large and we expect the required number of bags to be substantially smaller in practice. In principle, this can be iterated: run the algorithm on many seeds with an initial $V$, estimate the probability that two estimates differ by more than $\epsilon$, and increase the number of bags until this probability falls below $\delta$. However, this approach is computationally expensive. Instead, progress can be made by quantifying the seed variability and simultaneously incrementing both the number of seeds and the number of bags while recycling previous predictions until the seed stability condition is met.

\begin{theorem}\label{th:mcvar} 
Assume $\Psi(\data,\eta)$ is differentiable with respect to $\eta$. Let $\zeta$ and $\zeta'$ be sets of independent draws of random seed-and-bag pairs from $\Ps \otimes \Pb$, and $\bar \eta_\zeta$ and $\bar \eta_{\zeta'}$ be nuisance predictions obtained via cross-bagging. Then as $V \rightarrow \infty$
    \[
    V^{1/2}\left(\Psi(\data,\bar\eta_\zeta) - \Psi(\data,\bar\eta_{\zeta'})\right)
    \rightsquigarrow \mathcal{N}(0, 2\tau^2),
    \]
    where $\tau^2$ is the asymptotic variance of $V^{1/2} (\Psi(\data,\bar\eta_\zeta) - \Psi(\data,\bar\eta^\infty))$ as $V \rightarrow \infty$ and can be consistently estimated by
\[
\hat\tau^2 = \frac{1}{V(1 - m/n)^2}\sum_{v=1}^V \left(\sum_{l=1}^p \left[\sum_{i=1}^n \frac{\partial \Psi(\data, \bar\eta_\zeta)}{\partial \bar\eta_{\zeta,l,i}} M_{iv}\left(\,\mathfrak{F}(\data(B_m^v),s_v)(x_i) - \bar\eta_{\zeta,l,i}\right)\right]\right)^2 
\]
\end{theorem}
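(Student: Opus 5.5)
The plan is a multivariate central limit theorem for the cross-bagged nuisance vector, conditional on $\data$, followed by the delta method. Under cross-bagging the bags are i.i.d. draws from $\Ps \otimes \Pb$. For each observation $i$ and nuisance $l$, the pooled prediction is the ratio of two averages over bags:
\[
\bar\eta_{\zeta,l,i} = \frac{V^{-1}\sum_{v} M_{iv}\,\mathfrak F_l(\data(B_m^v),s_v)(x_i)}{V^{-1}\sum_v M_{iv}}.
\]
Each summand is an i.i.d. bounded random variable. Because the subsample is drawn uniformly, $E[M_{iv}] = 1-m/n$. The numerator therefore converges to $(1-m/n)\,\bar\eta^\infty_{l,i}$, where $\bar\eta^\infty_{l,i}$ is the OOB idealized prediction $E[\mathfrak F_l(\data(B_m),s)(x_i)\mid i\notin B_m]$.

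I will first treat $V$ as the number of bags, and at the end argue that the stopping rule (every observation OOB at least $V$ times) changes the count only by a factor converging almost surely to $(1-m/n)^{-1}$. This is absorbed by Slutsky's theorem.

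\textbf{Step 1 (linearization).} Write
\[
\bar\eta_{\zeta,l,i} - \bar\eta^\infty_{l,i} = \frac{V^{-1}\sum_v M_{iv}\bigl(\mathfrak F_l(\data(B_m^v),s_v)(x_i) - \bar\eta^\infty_{l,i}\bigr)}{V^{-1}\sum_v M_{iv}}.
\]
The summands $W_{v} := \bigl(M_{iv}(\mathfrak F_l(\data(B_m^v),s_v)(x_i)-\bar\eta^\infty_{l,i})\bigr)_{i,l}$ are i.i.d. vectors in $\mathds R^{np}$. They are bounded and have mean zero by definition of $\bar\eta^\infty$. The multivariate CLT (with $n,p$ fixed) gives $V^{-1/2}\sum_v W_v \rightsquigarrow \mathcal N(0,\Sigma)$. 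The denominator converges almost surely to $1-m/n$ by the strong law of large numbers, so Slutsky gives
\[
V^{1/2}(\bar\eta_\zeta - \bar\eta^\infty) \rightsquigarrow \mathcal N\bigl(0,\Sigma/(1-m/n)^2\bigr).
\]

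\textbf{Step 2 (delta method).} Since $\Psi(\data,\cdot)$ is differentiable at $\bar\eta^\infty$, the delta method yields $V^{1/2}(\Psi(\data,\bar\eta_\zeta)-\Psi(\data,\bar\eta^\infty)) \rightsquigarrow \mathcal N(0,\tau^2)$, where
\[
\tau^2 = \frac{\nabla\Psi^\top \Sigma\, \nabla\Psi}{(1-m/n)^2} = \frac{E\bigl[(\nabla\Psi^\top W_1)^2\bigr]}{(1-m/n)^2}.
\]

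\textbf{Step 3 (two independent runs).} Because $\zeta$ and $\zeta'$ are independent, the two centered, scaled estimates converge jointly to independent normals. Their difference is
\[
V^{1/2}\bigl(\Psi(\data,\bar\eta_\zeta)-\Psi(\data,\bar\eta^\infty)\bigr) - V^{1/2}\bigl(\Psi(\data,\bar\eta_{\zeta'})-\Psi(\data,\bar\eta^\infty)\bigr),
\]
which converges to $\mathcal N(0,2\tau^2)$ by the continuous mapping theorem.

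\textbf{Step 4 (consistency of $\hat\tau^2$).} $\hat\tau^2$ is the plug-in version of $E[(\nabla\Psi^\top W_1)^2]/(1-m/n)^2$: it replaces $\nabla\Psi(\bar\eta^\infty)$ by $\nabla\Psi(\bar\eta_\zeta)$ and $\bar\eta^\infty_{l,i}$ by $\bar\eta_{\zeta,l,i}$. I will write $\hat\tau^2$ as the sample mean $V^{-1}\sum_v (g_V^\top \widehat W_v)^2$, with $g_V$ the gradient at $\bar\eta_\zeta$ and $\widehat W_v = W_v - M_{\cdot v}(\bar\eta_\zeta-\bar\eta^\infty)$. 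Then $\bar\eta_\zeta\to\bar\eta^\infty$ almost surely, and $g_V\to\nabla\Psi(\bar\eta^\infty)$ provided the gradient is continuous. I will assume $\Psi$ is continuously differentiable, and note that this is needed. Since all terms are uniformly bounded, expanding the square shows the perturbation terms vanish and the main term converges by the law of large numbers.

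\textbf{Main obstacle.} The hardest step is handling the random number of bags $V^\dagger$ induced by the "OOB at least $V$ times" rule. The count is a stopping time, so the sum is not over a fixed index set. I plan to use Anscombe's theorem (a random-index CLT): as $V\to\infty$, $V^\dagger/V \to (1-m/n)^{-1}$ almost surely, since each $C_i/V^\dagger\to 1-m/n$ and $\max_i$ over finitely many $i$ preserves this. Anscombe's theorem then transfers the fixed-$V$ CLT to this random index. The resulting normalization constant is exactly what the factor $(1-m/n)^{-2}$ in $\hat\tau^2$ accounts for.
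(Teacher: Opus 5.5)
Your Steps 1--4 follow the paper's proof: a ratio of bag averages, the multivariate CLT for the numerator with Slutsky on $C_i/V\to 1-m/n$, the delta method, and a plug-in law of large numbers for $\hat\tau^2$. In three places you are more careful than the paper. First, you center at the out-of-bag mean $E[\mathfrak F_l(\data(B_m),s)(x_i)\mid i\notin B_m]$, which is what makes $E[W_v]=0$ true. The paper centers at the unconditional $\bar{\mathfrak F}^\infty(\data,\rho)(x_i)$, and for that choice its claim $E[U(M_{:,v},\bar\eta^\infty)]=0$ fails in general, because in-bag and out-of-bag fits at a training point differ. Second, you correctly note that consistency of $\hat\tau^2$ needs a continuous gradient. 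Third, you make the two-independent-runs step explicit.

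What does not work is your closing paragraph on the random bag count. The factor $(1-m/n)^{-2}$ in $\hat\tau^2$ is already spent in Step 1: it is the squared limit of the denominators $C_i/V$. Your own Step 4 shows that $\hat\tau^2$, as an average over the bags it sums, converges to the per-bag variance $\tau^2_{\rm bag}=E[(\nabla\Psi^\top W_1)^2]/(1-m/n)^2$, which is your Step-2 $\tau^2$. It cannot also absorb $V^\dagger/V\to(1-m/n)^{-1}$.

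Suppose $V$ is read as the OOB threshold, with $V^\dagger$ bags actually used. Anscombe plus Slutsky then give
\[
V^{1/2}\{\Psi(\data,\bar\eta_\zeta)-\Psi(\data,\bar\eta^\infty)\}\rightsquigarrow\mathcal N(0,(1-m/n)\tau^2_{\rm bag}).
\]
But $\hat\tau^2$ converges to $\tau^2_{\rm bag}$, or to $\tau^2_{\rm bag}/(1-m/n)$ if you sum over all $V^\dagger$ bags but divide by $V$. Under that reading the consistency claim is false, not rescued by Slutsky.

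The theorem holds when $V$ denotes the number of bags. That is how the paper's proof treats it, and it is what the adaptive rule $t_{V^\dagger-1,1-\delta/2}(2\hat\tau^2/V^\dagger)^{1/2}\le\epsilon$ presupposes. Under this reading, a random-index CLT is needed only to evaluate the fixed-count result at the data-dependent count $V^\dagger$. You would apply Anscombe to the vector of bag sums, whose index satisfies $V^\dagger/V\to(1-m/n)^{-1}$ almost surely, and then the delta method. The scaling is $(V^\dagger)^{1/2}$, the normalizer in $\hat\tau^2$ is $1/V^\dagger$, and no extra constant enters. Your last paragraph should be replaced by that argument.
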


Theorem~\ref{th:mcvar} is an asymptotic statement about the number of bags, not the sample size used for training the nuisance models. Equipped with $\hat{\tau}^2$, we may adaptively cross-bag until a desired level of seed stability is reached.

\begin{enumerate}
    \item Fix $\rho \in (0,1)$ and an initial target OOB threshold $V > 1$. Let $m = \lfloor \rho n \rfloor$. Construct $\zeta$, train nuisance models, compute pooled OOB predictions $\bar\eta_\zeta$, and form the cross-bagged estimator $\Psi(\data, \bar\eta_\zeta)$ as in the cross-bagging procedure above.
    \item Estimate $\hat\tau^2$ using the variance estimator in Theorem~\ref{th:mcvar}.
    \item Let $t_{V^\dagger - 1,1-\delta/2}$ be the $1-\delta/2$ quantile of the $t$-distribution with $V^\dagger - 1$ degrees of freedom. If
    \[
    t_{V^\dagger - 1,1-\delta/2}\left(\frac{2\hat\tau^2}{V^\dagger}\right)^{1/2} \leq \epsilon,
    \]
    return $\Psi(\data, \bar\eta_\zeta)$. Otherwise, draw additional seed-and-bag pairs, append them to $\zeta$, update $V^\dagger$, recompute the pooled OOB predictions $\bar\eta_\zeta$ and the estimator $\Psi(\data, \bar\eta_\zeta)$, and return to step 2.
\end{enumerate}

The adaptive cross-bagging algorithm constructs a $1-\delta$ confidence interval around $\E_{\zeta,\zeta'}[\Psi(\data,\bar\eta_\zeta) - \Psi(\data,\bar\eta_{\zeta'})] = 0$ and iteratively checks that this interval is contained in $[-\epsilon,\epsilon]$. The gradient in Theorem~\ref{th:mcvar} can easily be calculated using standard automatic differentiation software without deriving the analytic form.

\section{Numerical experiments}\label{sec:sim}

The primary aim of our numerical simulation is to (i) show that bagging stabilizes a seed-unstable algorithm $\mathfrak{F}$ given sufficiently many bags $V$, and (ii) compare our adaptive cross-bagging algorithm for controlling $(\epsilon,\delta)$-stability of $\Psi(\data,\eta)$ against na\"{i}ve averaging and increased cross-fitting folds. 

For the first goal, we generate a random data generating process (DGP) with one binary outcome $Y$ and 20 features $X$ using Claude Opus 4.6 \citep{anthropic2025claude}; the DGP may be found in the supplementary materials. We then simulate one dataset ($n=100$) and a single test observation, and draw $1000$ seeds from $\text{Unif}(\M)$. For each seed, we train a single layer neural network with 20 neurons \citep{rumelhart1986learning} and a subbagged version of the neural network for $\P(Y = 1 \mid X)$; we generate a prediction from each model on the same test observation. We use $\rho = 2/3$ and $V=320$ for training the subbagged neural network, where the number of bags is chosen to guarantee seed $(0.1,0.1)$-stability under the worst case variance of $\nu^2(x;\data,\rho) = 1/4$.

For the second aim we use the DGP from \citet{schader2024don}; it consists of four confounders $X$, binary exposure $A$, and binary outcome $Y$ and may be found in the supplementary materials. The target parameter $\psi_0$ is the ATE of $A$ on $Y$ adjusting for $X$. Again, we simulate a single dataset ($n=100$) and draw $1000$ seeds. For each seed, we estimate the ATE using AIPW with (i) 2-fold, 10-fold, and LOO cross-fitting, (ii) the ``averaging on final estimate'' approach from \citet{schader2024don} and \citet{chernozhukov2018double} with 80 seeds and 2-fold (the parameterization used by \citet{schader2024don}), 10-fold, and LOO cross-fitting, and (iii) with adaptive cross-bagging ($\rho = 1/2$) and stability criterion: $(\epsilon, \delta) = (0.01,0.01)$. Replicating \citet{schader2024don}, we estimate nuisance parameters using a random forest with 500 trees \citep{breiman2001random}. 

To avoid cherry-picking the generated datasets, an initial seed was also generated with Claude Opus 4.6 \citep{anthropic2025claude}. All simulations were conducted in \texttt{R} \citep{R}. Simulation code is available at \url{https://github.com/nt-williams/bagging-random-seed-stability}. 

\begin{figure}
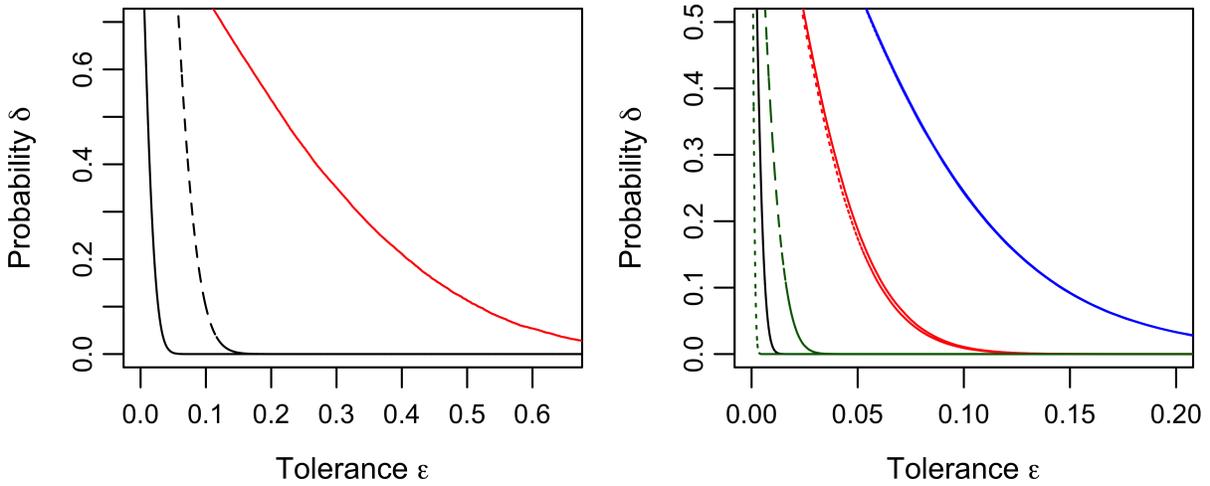

\centering
\begin{minipage}[t]{0.45\textwidth}
\centering
\figuresize{.125}
\figurebox{2pc}{2pc}{}[nnet-phase]
\end{minipage}%
\hspace{0.1\textwidth}%
\begin{minipage}[t]{0.45\textwidth}
\centering
\figuresize{.125}
\figurebox{2pc}{2pc}{}[adaptive-crossbag-aipw]
\end{minipage}
\caption{Empirical seed $(\epsilon,\delta)$-stability. Left: single layer neural network predictions. The red line corresponds to no bagging and the black line to subbagging using the estimated minimum $V$-bags for seed $(0.1,0.1)$-stability (dashed line). Right: AIPW estimator for the ATE. The blue, red, and green lines correspond to 2-fold, 10-fold, and LOO cross-fitting respectively; the black line corresponds to adaptive cross-bagging for seed $(0.01,0.01)$-stability. Dashed lines correspond to no seed averaging, while dotted lines correspond to averaging over 80 seeds.}
\label{fig:phase}
\end{figure}

\subsection{Results}

Results of the numerical experiments are shown in Figure~\ref{fig:phase}, where we plot an empirical estimate of the seed stability criterion averaged over a large number of seeds
\[\hat\delta_\epsilon = \frac{1}{\binom{1000}{2}} \sum_{i < j} \mathds{1}\!\left[\left|p_i - p_j\right| > \epsilon\right].\]
We assess random-seed stability via $r(\hat\delta_\epsilon, \delta) = \hat\delta_\epsilon / \delta$, where $r \leq 1$ indicates the $(\epsilon,\delta)$-stability criterion is met and larger $r$ indicates looser control (e.g., $r=10$ indicates the empirical stability was 10 times the tolerated level). In simulation 1, the subbagged neural network ($V=320$) achieves conservative stability, $r(\hat\delta_{0.1}, 0.1) = 0$, whereas the neural network without bagging yields $r(\hat\delta_{0.1},0.1) = 7.5$. In simulation 2, AIPW with adaptive cross-bagging controls stability near the target: $r(\hat\delta_{0.01},0.01) = 1.04$. Without averaging, 2-fold cross-fitting gives $r(\hat\delta_{0.01}, 0.01) = 90.5$; 10-fold, $79.12$; LOO, $31.52$. Averaging over 80 seeds, these become $90.35$ (2-fold), $78.33$ (10-fold), and $0$ (LOO). The median number of bags selected by the adaptive cross-bagging algorithm was $396$.

On an Apple M3 Max with 36 GB RAM, the adaptive cross-bagging algorithm achieved $(0.01,0.01)$-stability in approximately 20 seconds. The averaging on final estimate approach ran in approximately 11 seconds (2-folds), 1 minute (10-fold), and 10 minutes (LOO). While the averaging on final estimate approach with LOO cross-fitting also achieved $(0.01,0.01)$-stability, it required roughly 20 times the computation time.

\section*{Declaration of the use of generative AI and AI-assisted technologies}

During the preparation of this work the authors used Claude Opus 4.6 \citep{anthropic2025claude} in order to generate a data generating process for numerical simulation and to improve clarity of writing. After using this tool/service the author(s) reviewed and edited the content as necessary and take full responsibility for the content of the publication.

\section*{Supplementary material}
\label{SM}
The Supplementary Material includes proofs of the results and data generating processes.

\bibliographystyle{biometrika}
\bibliography{paper-ref}

\clearpage
\appendix

\section{Proofs}\label{appn:proofs}
\subsection*{Lemma 1}

\begin{proof}
    Define $R(x) := \mathfrak{F}(\data, s)(x) - \mathfrak{F}(\data, s')(x)$, which is a random variable in $[-1,1]$.
Because $\mathfrak{F}(\data, s)(x)$ and $\mathfrak{F}(\data, s')(x)$ are independent draws from $\mathcal F(\data)$
\begin{align*}
\text{Var}_{s,s'}\left(R(x) \right) &= 2\sigma^2(x;\data).
\end{align*}
In addition, recognize that $\mathfrak{F}(\data, s)(x)$ and $\mathfrak{F}(\data, s')(x)$ are unbiased for ${\mathfrak F}^\infty(\data)(x)$ such that $\E_{s,s'}[R(x)] = 0$. Then, by Bernstein's inequality 
\begin{align*}
\P_{s,s'}\left(\left| R(x) \right| \geq \epsilon \right) \leq 2 \exp\left(-\frac{\epsilon^2}{4\sigma^2(x;\data) + \frac{2}{3}\epsilon}\right).
\end{align*}
Let $k = \left|\test\right|$. We have
\begin{align*}
    \P_{s,s'}\left(\max_{x \in \test}\left|R(x)\right| \geq \epsilon\right) 
    &= \P_{s,s'}\left(\bigcup_{i=1}^k \left\{ \left|R(x_i)\right| \geq \epsilon\right\}\right) \\
    &\leq \sum_{i=1}^k \P_{s,s'}\left(\left|R(x_i)\right| \geq \epsilon\right)\\
    &\leq k \max_{x\in \test}\P_{s,s'}\left(\left|R(x)\right| \geq \epsilon\right).
\end{align*}
Therefore, to control $\P_{s,s'}\left(\max_{x \in \test}\left|R(x)\right| \geq \epsilon\right)$ with probability $\delta$, it suffices to control all point-wise predictions with probability $\delta/k$. Setting the right-hand side of the Bernstein inequality to less than or equal to $\delta/k$ and solving for $\sigma^2(\data,\test) = \max_{x \in \test} \sigma^2(x;\data)$ yields
\begin{align*}
\sigma^2(\data,\test) &\leq \frac{\epsilon}{4} \left(\frac{\epsilon}{\log(2k/\delta)} - \frac{2}{3}\right).
\end{align*}
\end{proof}

\subsection*{Theorem 1}

\begin{proof}
   Recall that $m = \lfloor\rho n\rfloor$ for $\rho \in (0,1)$, and that $(s,B) \sim \Ps \otimes \Pb$. Define
\[
\nu^2(x;\data,\rho) = \text{Var}_{(s,B)}(\mathfrak{F}(\data(B_m^v),s_v)(x)).
\]
Define $S(x) := \bar{\mathfrak{F}}(\data,\rho,\zeta)(x) -  \bar{\mathfrak{F}}(\data,\rho,\zeta')(x)$. Recognize that $E_{\zeta,\zeta'}[S(x)] = 0$, $\text{Var}_{\zeta,\zeta'}(S(x)) = 2 \nu^2(x;\data,\rho)/V$, and $S(x) \in [-1,1]$. Therefore, by Bernstein's inequality
\[
\P_{\zeta,\zeta'}\left(\left|S(x)\right| \geq \epsilon \right) \leq 2 \exp\left(-\frac{V\epsilon^2}{4\nu^2(x;\data,\rho) + \frac{2}{3}\epsilon}\right).
\]
Again, this point-wise bound can be made uniform over finite $\mathcal{T}\subset \mathcal X$, $k = |\mathcal{T}|$ with a union bound. Setting the right-hand side of the inequality less than or equal to $\delta/k$ and solving for $V$ yields
$$
V \ge \frac{\log(2k/\delta)}{\epsilon^2}\left(4\nu^2(x; \data,\rho) + \frac{2}{3}\,\epsilon\right),
$$
which holds for all $x \in \mathcal{T}$ if
$$
V \ge \max_{x\in \mathcal{T}}\frac{\log(2k/\delta)}{\epsilon^2}\left(4\nu^2(x; \data,\rho) + \frac{2}{3}\,\epsilon\right).
$$ 
\end{proof}

\subsection*{Corollary 1}

\begin{proof}
    Denote bagged nuisance predictions $\bar\eta_\zeta = \{\bar\eta_{\zeta,l}\}_{l=1}^p$ where $\bar\eta_{\zeta,l} = \{\bar{\mathfrak F}_l (\data, \rho, \zeta)(x_i) : z_i \in \data\}$ is a vector of $n$ bagged predictions from the $l$th regression algorithm generated under seed and bag pairs $\zeta$. Let $\left\|\bar\eta_\zeta \right\|_\infty = \max_i(\max_l(|\bar{\mathfrak F}_l (\data, \rho, \zeta)(x_i)|))$. Suppose that $\Psi(\data, \eta)$ is Lipschitz continuous in $\eta$ so that there is a constant $L$ such that
\[|\Psi(\data, \bar\eta_\zeta) - \Psi(\data,\bar\eta_\zeta')| \leq L \left\|\bar\eta_\zeta - \bar\eta_\zeta' \right\|_\infty.\]
Then, 
\begin{align*}
   \P_{\zeta,\zeta'}\left(\left|\Psi(\data, \bar\eta_\zeta) - \Psi(\data, \bar\eta_\zeta') \right|\geq \epsilon \right) &\leq \P_{\zeta,\zeta'}\left(L\left\|\bar\eta_\zeta - \bar\eta_\zeta' \right\|_\infty \geq \epsilon  \right) \\
   &=\P_{\zeta,\zeta'}\left(\left\|\bar\eta_\zeta - \bar\eta_\zeta' \right\|_\infty \geq \frac \epsilon L \right)\\
   &\leq \sum_{i=1}^n \sum_{l=1}^p \P_{\zeta,\zeta'} \left(\left|\bar\eta_{\zeta,l,i} - \bar\eta_{\zeta',l,i}\right| \geq \frac \epsilon L \right),
\end{align*}
Therefore to control $\P_{\zeta,\zeta'}\left(\left\|\bar\eta_\zeta - \bar\eta_\zeta' \right\|_\infty \geq \frac \epsilon L \right)$ with probability $\delta$, it suffices to control the random seed stability of all point-wise bagged predictions on $\data$ with probability $\delta/(np)$. 
From the previous proof we have
\[
\P_{\zeta,\zeta'} \left(\left|\bar\eta_{\zeta,l,i} - \bar\eta_{\zeta',l,i}\right| \geq \frac \epsilon L \right) \leq 2 \exp\left(-\frac{V\left(\epsilon/L\right)^2}{4\nu_l^2(x_i;\data,\rho) + \frac{2}{3}\left(\epsilon/L\right)}\right).
\]
Setting the right-hand side less than or equal to $\delta/(np)$ and solving for $V$ 
\[
V\geq \frac{\log(2np/\delta)}{(\epsilon/L)^2}\left(4\nu^2_l(x_i;\data,\rho) + \frac{2}{3}\frac{\epsilon}{L}\right),
\]
which holds simultaneously when
\[
V\geq \max_{l \in [p],z_i\in \data}\frac{\log(2np/\delta)}{(\epsilon/L)^2}\left(4\nu^2_l(x_i;\data,\rho) + \frac{2}{3}\frac{\epsilon}{L}\right).
\]
\end{proof}

\subsection*{Theorem 2}

\begin{proof}
    Let $\mathbf{1}_V$ be a vector of ones with length $V$. Define matrix $M \in \{0,1\}^{n \times V}$ where $M_{iv} = \mathds{1}[i \notin B_m^v]$, and $C = M \mathbf{1}_V$. Cross-bagged predictions of algorithm $\mathfrak F$ on data $\data$ are then given by
\[
\bar{\mathfrak F}(\data, \rho, \zeta)(x_i) = \frac{1}{C_i} \sum_v M_{i,v}\,\mathfrak{F}(\data(B_m^v),s_v)(x_i) = \frac{1/V \sum_vM_{i,v}\,\mathfrak{F}(\data(B_m^v),s_v)(x_i)}{C_i/V},
\]
The numerator is a sample average of i.i.d terms and therefore the central limit theorem may be applied. Note that as $V\to \infty$, $C_i/V \overset{p}{\rightarrow} (1 - m/n) \neq 0$. Therefore, by Slutsky's theorem, $\sqrt{V}\left(\bar{\mathfrak F}(\data, \rho, \zeta)(x_i) - \bar{\mathfrak{F}}^\infty(\data,\rho)(x_i)\right)$ also converges to a normal distribution as $V \to \infty$.
Define 
\[
\bar\eta_\zeta = \left\{\bar{\mathfrak F}(\data, \rho, \zeta)(x_i) : z_i \in \data \right\},\quad\bar\eta^\infty = \left\{\bar{\mathfrak F}^\infty(\data, \rho)(x_i) : z_i \in \data \right\}
\]
Then by the multivariate central limit theorem, 
\[
\sqrt{V}\left(\bar\eta_\zeta - \bar\eta^\infty\right) \rightsquigarrow  \mathcal{N}(0, \Sigma).
\]
Suppose the function $\Psi(\data, \eta)$ is differentiable in $\eta$. Then by the delta method, 
\[
\sqrt{V}\left(\Psi(\data,\bar\eta_\zeta) - \Psi(\data,\bar\eta^\infty)\right) \rightsquigarrow \mathcal{N}(0, \tau^2)
\]
where the asymptotic variance $\tau^2 = \nabla_\eta \Psi(\data, \bar\eta^\infty)'\Sigma \nabla_\eta \Psi(\data, \bar\eta^\infty)$. Recognize that $\tau^2$ is the asymptotic variance of the first order Taylor expansion $V^{1/2} (\Psi(\data, \bar\eta_\zeta) - \Psi(\data, \bar\eta^\infty)) \approx V^{1/2} \nabla_\eta \Psi(\data, \bar\eta^\infty)'(\bar\eta_\zeta - \bar\eta^\infty)$. Expressed algebraically, we have:
\begin{align*}
    \nabla_\eta \Psi(\data, \bar\eta^\infty)'(\bar\eta_\zeta - \bar\eta^\infty)
    &= \sum_i \frac{\partial \Psi(\data, \bar\eta^\infty)}{\partial \bar\eta_i^\infty}(\bar\eta_{\zeta,i} - \bar\eta^\infty_i)\\
    &= \sum_i \frac{\partial \Psi(\data, \bar\eta^\infty)}{\partial \bar\eta_i^\infty} \left(\frac{1}{C_i} \sum_v M_{i,v}\,\mathfrak{F}(\data(B_m^v),s_v)(x_i) -  \bar\eta^\infty_i\right)\\
    &= \sum_i \frac{\partial \Psi(\data, \bar\eta^\infty)}{\partial \bar\eta_i^\infty} \left(\frac{1}{C_i} \sum_v M_{i,v}\,\mathfrak{F}(\data(B_m^v),s_v)(x_i) -  \frac{1}{C_i} \sum_v M_{i,v}\,\bar\eta^\infty_i\right)\\
    &= \sum_v \sum_i \frac{\partial \Psi(\data, \bar\eta^\infty)}{\partial \bar\eta_i^\infty} \frac{M_{i,v}}{C_i}\left(\,\mathfrak{F}(\data(B_m^v),s_v)(x_i) - \bar\eta^\infty_i\right)\\
    &\approx \sum_v \sum_i \frac{\partial \Psi(\data, \bar\eta^\infty)}{\partial \bar\eta_{i}^\infty} \frac{M_{i,v}}{V(1- m/n)}\left(\,\mathfrak{F}(\data(B_m^v),s_v)(x_i) - \bar\eta^\infty_i\right)\\
    &= \frac{1}{V}\sum_v \frac{1}{(1 - m/n)} \sum_i \frac{\partial \Psi(\data, \bar\eta^\infty)}{\partial \bar\eta_{i}^\infty} M_{i,v}\left(\,\mathfrak{F}(\data(B_m^v),s_v)(x_i) - \bar\eta^\infty_i\right),
\end{align*}
where the second approximation uses the fact that as $V\to \infty$, $C_i/V \overset{p}{\rightarrow} (1 - m/n)$. Denote
\[
U(M_{:,v}, \bar\eta_\zeta) = \frac{1}{(1 - m/n)} \sum_i \frac{\partial \Psi(\data, \bar\eta_\zeta)}{\partial \bar\eta_{\zeta,i}} M_{i,v}\left(\,\mathfrak{F}(\data(B_m^v),s_v)(x_i) - \bar\eta_{\zeta,i}\right).
\]
Because $\bar\eta_\zeta \overset{p}{\rightarrow} \bar\eta^\infty$ as $V \rightarrow \infty$ and by the continuous mapping theorem $U(M_{:,v}, \bar\eta_\zeta)^2 \overset{p}{\rightarrow} U(M_{:,v},\bar\eta^\infty)^2$ as $V \rightarrow \infty$. Then by the weak law of large numbers and Slutksy's lemma
\[
\frac{1}{V}\sum_v U(M_{:,v}, \bar\eta_\zeta)^2 \overset{p}{\rightarrow} E[U(M_{:,v},\bar\eta^\infty)^2].
\]
Using the fact that $E[U(M_{:,v},\bar\eta^\infty)] = 0$, a consistent estimator for $\tau^2$ is then
\[
\hat\tau^2 = \frac{1}{V(1 - m/n)^2}\sum_v \left(\sum_i \frac{\partial \Psi(\data, \bar\eta_\zeta)}{\partial \bar\eta_{\zeta,i}} M_{i,v}\left(\,\mathfrak{F}(\data(B_m^v),s_v)(x_i) - \bar\eta_{\zeta,i}\right) \right)^2, 
\]
which generalizes to the case where $\eta$ has $p > 1$ dimensions:
\[
\hat\tau^2 = \frac{1}{V(1 - m/n)^2}\sum_v \left(\sum_l \left[\sum_i \frac{\partial \Psi(\data, \bar\eta_\zeta)}{\partial \bar\eta_{\zeta,l,i}} M_{i,v}\left(\,\mathfrak{F}(\data(B_m^v),s_v)(x_i) - \bar\eta_{\zeta,l,i}\right)\right] \right)^2.
\]
\end{proof}

\subsection*{Miscellanea results}

``For example, it trivially holds for the AIPW estimator of the ATE when $c < \pi_i < 1-c$ for some $c > 0$. ''

\begin{proof}
We have
\begin{equation*}\label{eq:intercept}
    \Psi(\data, \eta) = \frac{1}{n} \sum_{i=1}^n \left(\frac{a_i}{\pi_i} - \frac{1 - a_i}{1 - \pi_i} \right)\left(y_i - \mu_{i}^{a_i}\right) + \mu_{i}^{1} - \mu_{i}^{0}.
\end{equation*}
Which has partial derivatives with respect to $\eta$:
\begin{align*}
    \frac{\partial \Psi(\data,\eta)}{\partial \pi_i} &= -\frac{1}{n}\frac{2a_i -1}{\pi_i^2}(y_i - \mu_i^{a_i})\\
    \frac{\partial \Psi(\data,\eta)}{\partial \mu_i^0} &= \frac{1}{n}\left(\frac{1-a_i}{\pi_i} - 1\right)\\
    \frac{\partial \Psi(\data,\eta)}{\partial \mu_i^1} &= \frac{1}{n}\left( 1 - \frac{a_i}{\pi_i}\right) 
\end{align*}
A sufficient condition for all partial derivatives to be bounded:
\begin{enumerate}
    \item $\pi_i > c > 0$, 
    \item boundedness of $\mu_i^a$ for $a \in \{0,1\}$, and
    \item boundedness of $y_i$
\end{enumerate}
Conditions 2 and 3 hold by construction.
\end{proof}

\section{Simulation details}\label{app:dgps}
Define $\mathbf{0}_k$ and $\mathbf{1}_k$ as vectors of zeros and ones with size $k$, respectively. For the first simulation, we generated data from the following data generating mechanism:
\begin{equation*}
\begin{split}
(W_1, \ldots, W_5)^\top &\sim \mathcal{N}\!\left(\mathbf{0}_5,\; \Sigma_1\right), \quad [\Sigma_1]_{j,k} = 0.3 \cdot \mathds{1}(j \neq k) + \mathds{1}(j = k) \\
(W_6, \ldots, W_{10})^\top &\sim \mathcal{N}\!\left(\mathbf{1}_5,\; \Sigma_2\right), \quad [\Sigma_2]_{j,k} = 0.2 \cdot \mathds{1}(j \neq k) + \mathds{1}(j = k) \\
W_{11}, \ldots, W_{15} &\overset{\text{iid}}{\sim} \mathcal{N}(0,\; 1.5^2) \\
W_{16} &\sim \text{Gamma}(2,\; 1) \\
W_{17} &= 4(B - 1), \quad B \sim \text{Beta}(2,\; 5) \\
W_{18} &\sim \mathcal{N}(0,\; 4) \\
W_{19} &\sim t(\text{df} = 5) \\
W_{20} &\sim \text{Uniform}(-2,\; 2) \\[10pt]
\text{logit}\; E[Y \mid W] &= -0.5 \\
  &\quad + 0.5\, W_1 - 0.3\, W_2^2 + \sin\!\left(\tfrac{\pi}{2} W_3\right) + 0.4\, (W_4)_+ - 0.6\, |W_5| \\
  &\quad + 0.2\, W_6 + 5\, \phi(W_7;\, 1,\, 0.5) - 0.3\, W_8 \\
  &\quad + 0.5\, \text{sign}(W_9)\log(|W_9| + 1) - 0.15\, W_{10} \\
  &\quad + 0.8\, \mathds{1}(W_{11} > 1)\, W_{11} - 0.04\, W_{13}^3 \\
  &\quad + 0.3\, \sqrt{(W_{16})_+} - 0.6\cos(W_{18}) + 0.2\, W_{20} \\
  &\quad + 0.5\, W_1 W_6 - 0.4\, W_2 W_3 \\
  &\quad + 0.6\, W_4 \cdot \mathds{1}(W_7 > 0.5) - 0.3\, W_5 \sin(W_{18}) \\
  &\quad + \frac{0.7\, W_{11}\, W_{16}}{1 + |W_{11}|} - 0.5\, \mathds{1}(W_8 > 0)\, \mathds{1}(W_{13} > 0) \\
  &\quad + 0.4\, W_1\, W_7 \cdot \mathds{1}(W_{20} > 0) \\[10pt]
Y &= \text{expit}\!\left(\text{logit}\; \E[Y \mid W]\right)
\end{split}
\end{equation*}

For the second simulation we used the following data generating mechanism from \citet{schader2024don}:

\begin{equation*}
\begin{split}
W_1 &\sim \text{Uniform}(0,\; 2) \\
W_2,\, W_3,\, W_4 &\overset{\text{iid}}{\sim} \text{Bernoulli}(0.5) \\
\text{logit}\; {\rm pr}(A = 1 \mid W) &= W_1 + W_2 W_3 - 2\, W_4 \\
\text{logit}\; {\rm pr}(Y = 1 \mid A, W) &= W_1 + W_2 W_3 - 3.
\end{split}
\end{equation*}

\end{document}